\providecommand{\Appendix}{}
\renewcommand{\Appendix}[2][?]{%
        \refstepcounter{section}%
        \vspace{\parskip}%
        {\flushright\large\bfseries\appendixname\ \thesection: #1}%
        \vspace{\baselineskip}%
}
\renewcommand{\appendix}{%
        \newpage
        \renewcommand{\section}{\secdef\Appendix\Appendix}%
        \renewcommand{\thesection}{\Alph{section}}%
        \setcounter{section}{0}%
}
\newcommand{\initOneLiners}{%
    \setlength{\itemsep}{0pt}
    \setlength{\parsep }{0pt}
    \setlength{\topsep }{0pt}
}
\newcommand{\ignore}[1]{}
\newcommand{\shortv}[1]{}
\begin{document}

\title{($1+\epsilon$)-Distance Oracle on Planar, Labeled Graph}
\author{Mingfei Li, Christoffer Ma, and Li Ning \\
\small{Department of Computer Science, The University of Hong Kong}}

\maketitle

\begin{abstract}
Given a vertex-labeled graph, each vertex $v$ is attached with a label from
a set of labels. The vertex-label query desires the length of the shortest
path from the given vertex to the set of vertices with the given label. We show how to construct
an oracle if the given graph is planar, such that $O(\frac{1}{\epsilon}n\log n)$ storing space is needed,
and any vertex-label query could be answered in $O(\frac{1}{\epsilon}\log n\log \rho)$ time
with stretch $1+\epsilon$. $\rho$ is the radius of the given graph, which is half of the diameter.
For the case that $\rho = O(\log n)$, we construct an oracle that achieves $O(\log n)$ query time,
without changing the order of storing space.
\end{abstract}

\section{Introduction}

We consider those undirected graphs, in which each vertex is attached with a label from
a set of labels, denoted by $L$. Fixed such a graph $G=(V,E)$ and the label set $L$, the distance
between two nodes $v,u\in V$, denoted by $\delta(v,u)$ is the length of the shortest path connecting $v$
and $u$ in $G$, and the distance between a vertex $u\in V$ and a label $\lambda\in L$, denoted by $\delta(u,\lambda)$,
is the distance between $u$ and node $v$ that is closest to $u$ among all nodes with label $\lambda$,
i.e. $\delta(u,\lambda) = \min\{\delta(u,v)| v \textrm{ is attached with label $\lambda$}\}$.
In the applications involving graphs,
the query of vertex-label distance is often asked and used as a basic sub-procedure to achieve more complicated
task. For example, a navigation software need to answer how far is the closest store for a specified
service from the current position. Since these kind of questions raise very frequently, the answer
should be returned with as less time as possible. Trivially, people could precalculate and store
the answer for all possible queries. However, this may take too much space, which is of order $O(|V|\times |L|)$.

The aim of \emph{distance oracle} is to precalculate and store information using less than $O(|V|\times |L|)$
space, such that any distance query could be answered more efficiently than process the calculation
based only on the graph structure. The approximate distance oracle answers the query with some stretch.
In details, a distance oracle with stretch $1+\epsilon$ returns $d(u,\lambda)$ as the approximation
to $\delta(u,\lambda)$, such that $\delta(u,\lambda) \leq d(u,\lambda) \leq (1+\epsilon)\delta(u,\lambda)$.

Since in many practical cases, the given graph is drawn on a plane, it has
wide applications to derive the particular distance oracle for planar graphs.

\subsection{Related Work}

\noindent\paragraph{Vertex-Label Distance Oracle.}
The problem of construct approximate distance oracles for vertex-labeled graphs was formalized and studied
by Hermelin et al. in \cite{Hermelin2011}. Let $n$ denote the number of nodes and $m$ denote the number of edges.
They adapted the approximate scheme introduced by Thorup and Zwick
in \cite{Thorup2001} to show the construction of vertex-label distance oracles
with expected size $O(kn^{1+\frac{1}{k}})$, stretch $4k-5$ and query time $O(k)$.
The preprocessing time is $O(kmn^{\frac{1}{k}})$. Let $l = |L|$.
They also constructed vertex-label distance oracles with expected size  $O(knl^{\frac{1}{k}})$,
stretch $2^k-1$ and query time $O(k)$. The preprocessing time is $O(kmn^{\frac{k}{2k-1}})$.
For a vertex in the graph, the associated label may change. A simple way to support label changes is to
construct a new distance oracle. In \cite{Hermelin2011}, they constructed vertex-label distance oracles
with expected size $O(kn^{1+\frac{1}{k}})$, stretch $2\cdot 3^{k-1} + 1$ and query time $O(k)$,
which can support label changes in $O(kn^{\frac{1}{k}}\log n)$ time. In \cite{Chechik2011},
Chechik showed that Thorup and Zwick's scheme
could also be modified to support label changes in $O(n^{\frac{1}{k}}\log^{1-\frac{1}{k}}n \log\log n)$ time,
with the expected size $\tilde{O}(n^{1+\frac{1}{k}})$, stretch $4k-5$ and query time $O(k)$. The
preprocessing time is $O(kmn^{\frac{1}{k}})$.

The vertex-label distance oracle has also been studied for some specified class of graphs. Tao et al. have shown
how to construct vertex-label distance oracles for XML trees, in \cite{Tao2011}. For the case that
each node is assigned with exactly one label, their construction results in exact vetex-label
distance oracles with size $O(n)$, and query time $O(\log n)$. The preprocessing time is $O(n\log n)$.

\noindent\paragraph{Vertex-Vertex Distance Oracle.}
In \cite{Thorup2001}, Thorup and Zwick have introduced a well-known scheme to
construct vertex-vertex distance oracle with expected size $O(kn^{1+\frac{1}{k}})$, stretch $2k-1$,
and query time $O(k)$. The preprocessing time is $O(kmn^{\frac{1}{k}})$.
Wulff-Nilsen in \cite{Wulff-Nilsen2012} has improved the preprocessing time to
$O(\sqrt{k}m+kn^{1+\frac{c}{\sqrt{k}}})$ for some universal constant $c$, which
is better than $O(kmn^{\frac{1}{k}})$ except for very sparse graphs and small $k$.
For planar graphs, Klein in \cite{Klein2002} has shown how to construct
vertex-vertex distance oracles with size $O(\frac{1}{\epsilon}n\log n)$, stretch $1+\epsilon$ and
query time $O(1)$.

\noindent\paragraph{Shortest Path.}
The construction of distance oracles often harness the shortest path algorithms in preprocessing stage.
Although it is better to know as well as possible the methods that aim at calculating the shortest path,
we only selected the most related ones and list them here. For the others, we will introduce them while
they are used in our algorithm.

A \emph{shortest path tree} with vertex $v$ is a
tree rooted at $v$ and consisting of all nodes and a subset of edges from the given graph, such
that for any $u$ in the given graph, the path from $v$ to $u$ in the tree is the shortest path from
$v$ to $u$ in the original graph. Given a single vertex, to calculate the shortest path tree rooted at it
is called \emph{single source shortest path} problem.
In undirected graphs, the single source shortest path tree could be calculated in time of order $O(m)$ where
$m$ is the number of the edges in the given graph. The algorithm is introduced by Throup in \cite{Thorup1997}.
In directed graphs, the single source shortest path tree could be calculated in time of order $O(m + n\log n)$ where
$n$ is the number of the nodes in the given graph. This is done by the well known Dijkstra algorithm using
Fibonacci heap \cite{Cormen2001}.

\subsection{Simple Solution in Doubling Metrics Spaces.}
If the metric implied by the given graph is doubling, the following procedure provides a
simple solution to return $\delta(u,\lambda)$ with $(1+\epsilon)$-stretch.

\noindent\paragraph{Preprocessing.} Let $\epsilon' = \frac{\epsilon}{3}$. For $\epsilon < 1$,
$(1+\epsilon')^2 < 1+\epsilon$.
For each label $\lambda \in L$, construct the oracle to support
$(1+\epsilon')$-nearest neighbor search. In additional, construct the oracle to support
$(1+\epsilon')$ vertex-vertex distance query.

\noindent\paragraph{Query.} Given $u\in V$ and $\lambda\in L$, find the $(1+\epsilon')$-NN of $u$
among the nodes with label $\lambda$, and then query for their $(1+\epsilon')$ distance.

\noindent\paragraph{Space and Query Time.} The oracle supporting approximate nearest neighbor search
for $\lambda$ could be constructed using $O(n_{\lambda})$ space \ignore{\cite{}},
where $n_i$ is the number of nodes with label
$\lambda$. Since any node is allowed to attached with only one label, then the space used in all is $O(n)$.
This kind of oracle could answer the query in $O(\log n_{\lambda}) = O(\log n)$ time \ignore{\cite{}}.
The oracle supporting approximate vertex-vertex query distance could be constructed using $O(n)$ space and
answer the query in $O(1)$ time \ignore{\cite{}}.

\subsection{Our Contribution}
As shown in the subsection of related work, we are not aware of any vertex-label distance oracle on planar graphs.
In this paper, we mainly show the following two results.

\begin{theorem}
Given an undirected planar graph $G=(V,E)$ and a label set $L$, each vertex $v\in V$ is attached with
one label in $L$. For any $0< \epsilon < 1$, there exists an oracle that could answer
any vertex-label query with stretch $1+\epsilon$, in $O(\frac{1}{\epsilon}\log n\log \rho)$ time.
The oracle needs $O(\frac{1}{\epsilon}n\log n)$ space.
\end{theorem}

\begin{theorem}
Given an undirected planar graph $G=(V,E)$ and a label set $L$, each vertex $v\in V$ is attached with
one label in $L$. If the radius of $G$ is of order $O(\log n)$, then
for any $0< \epsilon < 1$, there exists an oracle that could answer
any vertex-label query with stretch $1+\epsilon$, in $O(\frac{1}{\epsilon}\log n)$ time.
The oracle needs $O(\frac{1}{\epsilon}n\log n)$ space.
\end{theorem}

\setcounter{theorem}{0}

\section{Preliminary}
\noindent\paragraph{Lipton Tarjan Separator. \cite{Lipton1979}} Let $T$ be a spanning tree of a planar embedded
triangulated graph $G$ with weights on nodes. Then there is an edge $e\not\in T$, s.t. the strict interior
and strict exterior of the simple cycle in $T\cup \{e\}$ each contains weight no more than $\frac{2}{3}$
of the total weight.

\noindent\paragraph{Recursive Graph Decomposition \cite{Klein2002}.} The recursive graph decomposition (RGD) of a given graph
$G$ is a rooted tree, such that each vertex $p$ in $G$ maintains
\begin{itemize}
\item
a set $N(p)$ of nodes in $G$, in particular the root of RGD maintains (as a label) $N(p) = V(G)$, and
\item
$p$ is a leaf of RGD \emph{iff.} $N(p)$ contains only one node of $G$, in this case let $S(p) = N(p)$;
\item
if $p$ is not a leaf of RGD, it maintains (as a label) an $\alpha$-balanced separator $S(p)$ of $G$, balanced
with respect to the weight assignment in which each node in $N(p)$ is assigned weight $1$ and other
nodes are assigned weight $0$.
\end{itemize}
A non-leaf vertex $v$ of the tree has two children $p_1$ and $p_2$, such that
\begin{itemize}
\item
$N(p_1) = {v\in N(p)\cap ext(\tilde{S}(p))}$, and
\item
$N(p_2) = {v\in N(p)\cap int(\tilde{S}(p))}$,
\end{itemize}
where $\tilde{S}$ denotes the cycle corresponding to a separator $S$.
For a leaf node $p$ of RGD, $N(p)$ contains only one node of $G$. In practice,
$N(p)$ may contain a small number of nodes, such that the distances in the
subgraph induced by $N(p)$ for every pair of nodes in $N(p)$ are pre-calculated and
stored in a table support $O(1)$ time look-up.

\noindent\paragraph{Range Minimum Query.} The range minimum query problem is to preprocess an array of length $n$
in $O(n)$ time such that all subsequent queries asking for the position of a minimal element between two specified
indices can be answered quickly. This can be done in constant time using no more than $2n+o(n)$ bits
\cite{Fischer2007}.

\subsection{Notation}

\noindent\paragraph{Projection.} Given a set of nodes $S$, and a node $v$, we define the projection
of $v$ on $S$ as the node in $S$ that is closest to $v$.

\noindent\paragraph{Radius} A graph has radius $\rho$ \emph{iff.} it has a shortest path tree with at most
$r$ levels.

\section{($1+\epsilon$)-Stretch, $O(\frac{1}{\epsilon}\log n\log \rho)$ Query Time Oracle}
\subsection{Preprocessing}
Find the node node $r\in G$ whose shortest path tree has $\rho$ levels, compute
the shortest-path tree $T$ in $G$ rooted at $r$ and based on $T$.
Calculate the RGD. Then \textbf{store},
\begin{enumerate}
\item
a table records, for each node $v\in G$, the leaf node $p\in$RGD, s.t. $v \in N(p)$;
\item
a table records, for each node $p\in$RGD, the depth of $p$ in RGD;
\item
a representation of RGD support quick ($O(1)$ time) computation of lowest common ancestor (lca);
\item
a table $T_v$ for each node $v\in G$ records, for each $p\in$RGD such that $v\in N(p)$, two sub-tables
for the paths $P'$, $P''$ in the separator $S(p)$, respectively. In details, $T_v[p][P']$ (similar for $T_v[p][P'']$)
consists of a sequence of $O(\frac{1}{\epsilon})$ pairs $(d_{-q}, h_{-q}), \ldots, (d_{0}, h_{0}), \ldots, (d_{w}, h_{w})$,
where $d_i$ is the distance from $v$ to a node $z_i$ on $P'$ and $h_i$ is the distance from $z_i$ to $r$ (root of the shortest path tree $T$),
such that the sequence has the \emph{distance property}: for any node $w$ on $P'$. there is a node $z_i$ such that
the distance from $v$ to $z_i$ plus the distance from $z_i$ to $w$ is at most $(1+\epsilon)$ times the distance from $v$ to $w$.
Refer to the nodes $z_i$ as \emph{portals}, and to the corresponding $d_i$ as \emph{portal distances}. In \cite{Klein2002},
it is proved that $O(\frac{1}{\epsilon})$ portals is enough to promise the distance property.
To be self-contained, we include in the appendix a simple version of the proof (See Appendix~\ref{app1}).
\end{enumerate}
Parts $1$ to $3$ need $O(n)$ \textbf{space}. Part $4$ needs $O(\frac{n}{\epsilon}\log n)$ \textbf{space}.
In addition, we \textbf{store} the portals for each label $\lambda$. In details, we store
\begin{enumerate}
\item[5.]
a table $T_{\lambda}$ for each label $\lambda \in L$, in which there is an entry for each piece $p\in$RGD, s.t.
$N(p)\cap V(\lambda)\neq \emptyset$. In each entry, it stores two sequence for the paths $P'$ and $P''$ forming $S(p)$,
respectively. In details, the sequence for $P'$ stores all the portals on $P'$ for nodes in $N(v)\cap V(\lambda)$, in
the increasing order according to their distances from the root $r$.

\item[6.] a hash table for each label $\lambda$ indicates whether there is an entry for a given separator in $T_{\lambda}$ and
return the index in $T_{\lambda}$ if yes (both operation could be done in constant time).
\end{enumerate}
Part $5$ needs $O(n\log n)$ \textbf{space} in total. Part $6$ needs $O(n\log n)$ \textbf{space} in total, since $T_{\lambda}$ contains one separator $S(p)$ \emph{iff.} there exists
at least one node with label $\lambda$ in $N(p)$.

\subsection{Query}
Given a node $u\in G$ and a label $\lambda\in L$, do as Algorithm~\ref{algo1}.

\SetAlgorithmName{Algorithm}{Name}{Algorithm}
\begin{algorithm}[H]
\SetAlgoLined
    \KwIn{$u$, $\lambda$}
\quad\\
\textbf{Initialization: $d(u,\lambda) \leftarrow \infty$}\\
\For{Each $p\in$RGD s.t. $u\in N(p)$ and $T_{\lambda}$ has an entry for $p$}{
	\For{Each path $P$ of $S(p)$}{
		\For{Each path portal $z_u$ of $u$ on $P$}{
			$C^+ \leftarrow $ $\{\lambda$'s portals on $P$ that is farther or equal than $z_u$ from $r\}$\\
			$\{z^+, v^+\} \leftarrow $ the portal of some $\lambda$ labeled node $v$ that
			achieves $\min \{\delta(v,z_v) + h(z_v)\}$ over $C^+$, and $v$\\
			$C^- \leftarrow $ $\{\lambda$'s portals on $P$ that is closer or equal than $z_u$ from $r\}$\\
			$\{z^-, v^-\} \leftarrow $ the portal of some $\lambda$ labeled node $v$ that
			achieves $\min \{\delta(v,z_v) - h(z_v)\}$ over $C^-$, and $v$\\
			$d' \leftarrow \{\delta(u,z_u) + \delta(z_u,z^+) + \delta(v,z^+),
			\delta(u,z_u) + \delta(z_u,z^-) + \delta(v,z^-)\}$\\
			$d(u,\lambda) \leftarrow \min\{d', d(u,\lambda)\}$\\
		}
	}
}

\KwOut{$d(u,\lambda)$}
\caption{}
\label{algo1}
\end{algorithm}

\begin{lemma}
Given $u$, $\lambda$, let $v$ be the $\lambda$ labeled node satisfying $\delta(u,v) = \delta(u,\lambda)$.
There exist a portal $z_u$ of $u$ and a portal $z_{v}$ of $v$ on the same path $P$, such that
$\delta(u, z_u)+\delta(z_u, z_v)+\delta(z_v, v)\leq (1+\epsilon)\delta(u,\lambda)$.
\end{lemma}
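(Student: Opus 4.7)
The plan is to locate a single piece $p$ of the RGD whose separator cycle $\tilde S(p)$ is crossed by a shortest $u$-to-$v$ path, and then invoke the portal distance property at both $u$ and $v$ with respect to that crossing node. Concretely, I would take $p$ to be the lowest common ancestor (in the RGD) of the leaves containing $u$ and the $\lambda$-labeled node $v$; the case $u=v$ makes $\delta(u,\lambda)=0$ and is trivial. By construction of the RGD children $N(p_1)=N(p)\cap ext(\tilde S(p))$ and $N(p_2)=N(p)\cap int(\tilde S(p))$, both $u$ and $v$ lie in $N(p)$ but in opposite children, so they sit on opposite sides of the simple cycle $\tilde S(p)$ in the planar embedding.

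Next, since $\tilde S(p)$ topologically separates its interior from its exterior, any $u$-to-$v$ path in $G$ must contain at least one vertex of $S(p)$. Pick $w\in S(p)$ to be such a vertex on a shortest $u$-$v$ path, and let $P\in\{P',P''\}$ be whichever of the two subpaths composing $S(p)$ contains $w$. Because $u,v\in N(p)$, both portal tables $T_u[p][P]$ and $T_v[p][P]$ are defined, so the distance property (proved in Appendix~\ref{app1}) supplies portals $z_u$ of $u$ and $z_v$ of $v$ on the same path $P$ satisfying
\[
\delta(u,z_u)+\delta(z_u,w)\le (1+\epsilon)\,\delta(u,w),\qquad
\delta(v,z_v)+\delta(z_v,w)\le (1+\epsilon)\,\delta(v,w).
\]
Adding these two inequalities, then using the triangle inequality $\delta(z_u,z_v)\le \delta(z_u,w)+\delta(w,z_v)$ on the left and the identity $\delta(u,w)+\delta(w,v)=\delta(u,v)=\delta(u,\lambda)$ on the right (which holds because $w$ lies on a shortest $u$-$v$ path), yields
\[
\delta(u,z_u)+\delta(z_u,z_v)+\delta(z_v,v)\le (1+\epsilon)\,\delta(u,\lambda),
\]
which is exactly the desired conclusion.

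The step I expect to be the main obstacle is the topological separation claim: that whenever $u$ and $v$ lie in different children of $p$ in the RGD, every $u$-$v$ path in $G$ contains a vertex of $S(p)$. This is not a distance-based inequality but a consequence of $\tilde S(p)$ being a simple cycle in the planar embedding produced by the Lipton-Tarjan construction, combined with the interior/exterior split used to define the children of $p$. Once this separation is granted, the remainder is just two invocations of Klein's portal distance property and a single triangle inequality, so no further delicate estimates arise.
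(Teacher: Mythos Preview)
Your proposal is correct and follows essentially the same route as the paper: take $p$ to be the lca in the RGD of the leaves containing $u$ and $v$, use the separator $S(p)$ to find a crossing vertex on a shortest $u$--$v$ path, and then apply the portal distance property once for $u$ and once for $v$ on the path of $S(p)$ containing that vertex. In fact your write-up is more careful than the paper's terse version---you explicitly pick the subpath $P\in\{P',P''\}$ containing the crossing point so that $z_u$ and $z_v$ land on the same path, and you spell out the triangle inequality and the additivity $\delta(u,w)+\delta(w,v)=\delta(u,v)$ that the paper leaves implicit.
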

\begin{proof}
Let $p_u$, $p_{v}$ be the lowest pieces in RGD containing $u$, $v$, respectively,
i.e. $u\in N(p_u)$ and $v\in N(p_{v})$. Let $p_{uv}$ be the lca of $p_u$ and $p_v$ in RGD.
Then $u\in N(p_{uv})$, $v\in N(p_{uv})$, and the shortest
path from $u$ to $v$ crosses with $S(p_{uv})$. Denote the crossing point as $c$.
There exists a $u$'s portal $z_u$, such that $\delta(u,z_u)+\delta(z_u,c)\leq (1+\epsilon)\delta(u,c)$,
and a $v$'s portal $z_v$, such that $\delta(v,z_v)+\delta(z_v,c)\leq (1+\epsilon)\delta(v,c)$.

Hence $\delta(u, z_u)+\delta(z_u, z_v)+\delta(z_v, v) \leq (1+\epsilon)\delta(u,\lambda)$.
\end{proof}
This lemma implies that the output of Algorithm~\ref{algo1} achieves the $(1+\epsilon)$-approximation to
$\delta(u,\lambda)$, since
\begin{itemize}
\item
if $z_v$ is farther than $z_u$ from $r$,
then $h(z_v)+\delta(z_v,v)\geq h(z^+) + \delta(z^+, v^+))$, and hence
\begin{eqnarray*}
&&\delta(u,z_u) + \delta(z_u,z^+) + \delta(v^+,z^+) \\
&\leq& \delta(u,z_u) + \delta(z_u,z_v) + \delta(v,z_v)\\
&\leq& (1+\epsilon)\delta(u,\lambda);
\end{eqnarray*}
\item
if $z_v$ is closer than $z_u$ from $r$,
then $-h(z_v)+\delta(z_v,v)\geq -h(z^-) + \delta(z^-, v^-))$, and hence
\begin{eqnarray*}
&&\delta(u,z_u) + \delta(z_u,z^-) + \delta(v^-,z^-) \\
&\leq& \delta(u,z_u) + \delta(z_u,z_v) + \delta(v,z_v)\\
&\leq& (1+\epsilon)\delta(u,\lambda).
\end{eqnarray*}
\end{itemize}

To show the query time $O(\frac{1}{\epsilon}\log n\log \rho)$, we only need to show that $v^+$ ($v^-$) and $z^+$
($z^-$) could be found in $O(\log \rho)$ time. Actually, this could be done by identifying the range of $C^+$ ($C^-$)
of the portals of $\lambda$ on the specified path, using $O(\log \rho)$ time, and locating $v^+$ ($v^-$) by
range minimum query, using $O(1)$ time.

\begin{theorem}
Given an undirected planar graph $G=(V,E)$ and a label set $L$, each vertex $v\in V$ is attached with
one label in $L$. For any $0< \epsilon < 1$, there exists an oracle that could answer
any vertex-label query with stretch $1+\epsilon$, in $O(\frac{1}{\epsilon}\log n\log \rho)$ time.
The oracle needs $O(\frac{1}{\epsilon}n\log n)$ space.
\end{theorem}

\subsection{$3$-Stretch, $O(\log n\log \rho)$-Query Time Oracle}

Consider the case that $\epsilon = 2$. The oracle supports the $3$-stretch, $O(\log n\lg \rho)$-query time,
using space $O(n\log n)$. The \emph{space, query time product} (suggested by Christian Sommer \cite{Sommer2012})
is $O(n\log^2 n\log \rho)$, which is better than $O(n^{\frac{3}{2}})\times O(1)$ for general graphs.

Note that in this case, each node $u$ has only one portal on a specified path of a separator, which is the projection, denoted
by $z_u$, i.e. the node on the path closest to $u$. The reason is for any node $z$ on the path, we have
$\delta(z,z_u) \leq \delta(u,z_u)+ \delta(u,z) \leq 2\delta(u,z_u)$, and hence
$\delta(u,z) \leq \delta(u,z_u)+ \delta(z_u,z) \leq 3\delta(u,z_u)$.

\section{$O(1)$ Time to Identify $C^+$ ($C^-$) when $\rho = O(\log n)$}

In the case that $\rho = O(\log n)$, the time to identify $C^+$ ($C^-$) is $O(\log\log n)$. We show that
this could be improved to $O(1)$.

At first, note that when we store the portals for a label $\lambda$, it
is possible that a node servers as the portals for different nodes. It is obvious that we can only store
the one with the minimum portal-node distance. Thus fixed a label $\lambda$, on a path of a separator,
each node serves as at most one portal of $\lambda$. Using a word of $\rho = O(\log n)$ bits, denoted by $\omega$,
it can be identified whether a node on the path is a portal, i.e. the $i$-th bit is $1$ \emph{iff.}
the $i$-th node on the path is a portal for $\lambda$.
If the portals on a path for $\lambda$ are stored in the increasing order
of their positions on the path, its index could be retrieved by counting how many $1$ there are before the $i$-th
position of $\omega$. Since any operation on a single word is assumed to cost $O(1)$ time, we
achieve the $O(1)$ time method to identify $C^+$, with
\begin{itemize}
\item
$O(n\log n)$ space to record the position on the path forming separator, for each portal; and
\item
$O(n\log n)$ space to store $\omega$'s for all labels.
\end{itemize}

\begin{theorem}
Given an undirected planar graph $G=(V,E)$ and a label set $L$, each vertex $v\in V$ is attached with
one label in $L$. If the radius of $G$ is of order $O(\log n)$, then
for any $0< \epsilon < 1$, there exists an oracle that could answer
any vertex-label query with stretch $1+\epsilon$, in $O(\frac{1}{\epsilon}\log n)$ time.
The oracle needs $O(\frac{1}{\epsilon}n\log n)$ space.
\end{theorem}

\section{Label Changes}

We consider the cost to update the oracle, if a node $v$ changes its label from $\lambda_1$ to $\lambda_2$.
The portals of $v$ are not affected. However, the portals of $\lambda_1$ and $\lambda_2$ should be change.

To remove the portals of $v$ from the portals of $\lambda_1$, it requires to change the hash table indicating
of whether a separator is related to $\lambda_1$ for at most once, and change the portal sequences of $\lambda_1$
for at most $O(\log n)$ separators.

To add the portals of $v$ to the portals of $\lambda_2$, it requires to change the hash table indicating
of whether a separator is related to $\lambda_2$ for at most once, and change the portal sequences of $\lambda_2$
for at most $O(\log n)$ separators.

\section{Application}

\noindent\paragraph{Nearest Neighbor Search for Multiple Sets.}
Given a set $V(\lambda)$ of nodes, it is trivial to construct a linear size ($O(n)$) oracle
to support the query the nearest neighbor in $V(\lambda)$ for a query node $u$, i.e. the closest node to $u$ in $V(\lambda)$.
However, if there are several such sets $\{V(\lambda_i)\}_{\lambda_i\in L}$, this trivial method needs
$O(|L|\cdot n)$ space, which may be as big as $O(n^2)$ even each node is associated with only one label.
Using the oracle introduced in this document, we may construct an oracle using $O(\frac{1}{\epsilon}n\log n)$
space to support the query of $(1+\epsilon)$-NNS between a node and a label in $O(\frac{1}{\epsilon}\log n\log \rho)$
time.

Let's consider the case that each node in the graph could be associated with more than one labels.
In this case, $K = \sum_{\lambda_i \in L} |V(\lambda_i)|$ could be bigger than $n$, the oracle introduced
here needs $O(\frac{1}{\epsilon}K\log n)$ space.

Note that $O(|L|n) = \sum_{\lambda_i\in L} O(n)$ and
$O(\frac{1}{\epsilon}K\log n) = \sum_{\lambda_i\in L} O(\frac{1}{\epsilon}|V(\lambda_i)|\log n)$.
Hence the method introduced here is more efficient on space if $\frac{1}{\epsilon}|V(\lambda_i)| = o(\frac{n}{\log n})$
for all $\lambda_i\in L$.

\bibliographystyle{plain}
\bibliography{dist}

\appendix

\section{Finding Portals to Promise Distance Property}
\label{app1}

\begin{lemma}
For each node $v$ and each path $P'$, there exists a set $\{z_i\}$ of size less than $4(\epsilon - \epsilon^2)^{-1}$ for which the distance condition is satisfied.
\end{lemma}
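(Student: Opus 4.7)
Let $w^*$ be the projection of $v$ on $P'$, set $d := \delta(v, w^*)$, and take $z_0 := w^*$. I plan to add further portals symmetrically in each of the two sub-paths of $P'$ emanating from $w^*$; focus on one such sub-path, parametrize it by arc length $s \ge 0$ from $w^*$, and write $f(s) := \delta(v, P'(s))$. The function $f$ is $1$-Lipschitz, and two triangle-inequality arguments give $f(s) \ge d$ (because $w^*$ is the projection) and $f(s) \ge s - d$ (triangle through $w^*$).

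The first step is to show that $z_0$ alone already witnesses the distance property on both tails of the parametrization. For $w = P'(s)$, the triangle bound $\delta(v, z_0) + \delta(z_0, w) \le d + s$ combined with $f(s) \ge d$ yields a $(1+\epsilon)$-approximation whenever $s \le \epsilon d$, and combined with $f(s) \ge s-d$ yields one whenever $s \ge (2+\epsilon)d/\epsilon$. Hence only the middle band $I := (\epsilon d,\,(2+\epsilon)d/\epsilon]$ needs further portals.

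On $I$, I place portals at positions $s_1 < s_2 < \cdots < s_N$ and set $z_i := P'(s_i)$. The per-portal inequality $\delta(v, z_i) + \delta(z_i, w) \le f(s) + 2(s - s_i)$ (obtained by triangle inequality on each leg) shows that any spacing with $s_{i+1} - s_i \le (\epsilon/2)\, f(s_{i+1})$ lets $z_i$ serve every $w = P'(s)$ with $s \in [s_i, s_{i+1}]$. I split $I$ at $s = 2d$: in the near sub-band $[\epsilon d, 2d]$ I use uniform spacing $\epsilon d/2$ (permitted by $f \ge d$), and in the far sub-band $(2d,\, (2+\epsilon)d/\epsilon]$ I use the geometric recurrence $(s_{i+1} - d)/(s_i - d) = 1/(1 - \epsilon/2)$ (permitted by $f(s) \ge s - d$).

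The main obstacle is the portal count. The near sub-band contributes at most $2(2-\epsilon)/\epsilon$ portals per side, comfortably within the claimed bound. The geometric tail yields $\log_{1/(1-\epsilon/2)}(2/\epsilon)$ portals per side; a naive estimate $\ln(1/(1-\epsilon/2)) \ge \epsilon/2$ gives $O((\log(1/\epsilon))/\epsilon)$, which is a log factor too large for the stated $4/(\epsilon(1-\epsilon))$ bound. To match the exact constant, I would sharpen the construction by letting each $z_i$ cover only the midpoints of its two adjacent intervals (so the allowed spacing effectively doubles to $\epsilon f(s)$, giving the stronger recurrence $(s_{i+1}-d)/(s_i-d) = 1/(1-\epsilon)$) and then use a tighter estimate of $\ln(1/(1-\epsilon))$ that exploits the $(1-\epsilon)$ denominator present in the stated bound. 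This numerical bookkeeping is the delicate part of the argument; once settled, symmetrizing over the two sub-paths of $P'$ yields the stated total of strictly fewer than $4/(\epsilon - \epsilon^2)$ portals.
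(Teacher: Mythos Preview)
Your construction has a genuine counting gap that the proposed sharpening does not close. First, the ``comfortably within'' claim for the near sub-band is already false: with spacing $\epsilon d/2$ on $[\epsilon d,2d]$ you get roughly $4/\epsilon$ portals per side, hence about $8/\epsilon$ in total, which exceeds the target $4/(\epsilon-\epsilon^2)\approx 4/\epsilon$ by a factor of two even before the geometric tail is added. Second, the logarithmic loss in the tail is intrinsic to any a~priori geometric spacing: the relevant arc-length window has $s-d$ ranging over $[d,\,2d/\epsilon]$, so any ratio $1+c\epsilon$ yields $\Theta\bigl((1/\epsilon)\log(1/\epsilon)\bigr)$ intervals, and the midpoint trick or a sharper bound on $\ln(1/(1-\epsilon))$ only affects the constant $c$, not the $\log(1/\epsilon)$.

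The missing idea is to count not via the positions $s_i$ but via the potential $\Phi(z)=\delta(v,z)-\delta(r,z)$ (equivalently $f(s)-s$ in your parametrization). This quantity is confined to an interval of length $2d$ because $|f(s)-s|\le d$. The paper selects portals \emph{greedily}: $z_i$ is the first point beyond $z_{i-1}$ at which $z_{i-1}$ fails to be a $(1+\epsilon)$-portal, i.e.\ $(1+\epsilon)\,\delta(v,z_i)<\delta(v,z_{i-1})+\delta(z_{i-1},z_i)$. Rearranging and using $\delta(z_{i-1},z_i)=\delta(r,z_i)-\delta(r,z_{i-1})$ gives $\Phi(z_i)<\Phi(z_{i-1})-(\epsilon-\epsilon^2)\,\delta(v,z_0)$, a linear drop of at least $(\epsilon-\epsilon^2)d$ per step. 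Telescoping over the range $2d$ yields $i<2/(\epsilon-\epsilon^2)$ per side and hence the stated bound. Your spacing rule $s_{i+1}-s_i\le(\epsilon/2)f(s_{i+1})$, by contrast, only forces $\Phi$ to be nonincreasing (via Lipschitz), not to drop by a fixed amount, which is why your count cannot avoid the extra logarithm.
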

\begin{proof}
Let $z_0$ be the node on $P'$ that is closest to the node $v$ and then we choose the remaining portals $z_i$ in two phases.
\begin{itemize}
\item
\textbf{Phase 1.} In this phase, we choose a set of nodes $z_i$ that are closer than $z_0$ to the root $r$, using Algorithm~\ref{algo2}.
Define a node in $z$ on $P'$ to be a \emph{candidate} with respect to (w.r.t.) $i$ iff.
\begin{enumerate}
\item
$z$ is closer tot eh root than $z_{i+1}$, and
\item
$\delta(v,z) < (1+\epsilon)^{-1}(\delta(v,z_{i+1} + \delta(r,z_{i+1}) + \delta(r,z))$.
\end{enumerate}

\SetAlgorithmName{Algorithm}{Name}{Algorithm}
\begin{algorithm}[H]
\SetAlgoLined

\textbf{Initialization: } $i \leftarrow -1$\\
\While{$\exists$ candidates w.r.t $i$}{
	$z_i \leftarrow$ candidate $z$ that is farthest from $r$\\
	$i \leftarrow i-1$
}
\caption{}
\label{algo2}
\end{algorithm}

Note the invariant for Phase 1: for $i < 0$ and any node $h$ lying strictly between
$z_i$ and $z_{i+1}$ on $P'$, we have $\delta(v,z_{i+1})+ \delta(z_{i+1}, h) \leq (1+\epsilon)\delta(v,h)$.
In particular, if there is no candidate w.r.t. $k$, then for any node $h$ lying strictly between
the root $r$ and $z_{i+1}$ on $P'$, we have $\delta(v,z_{i+1})+ \delta(z_{i+1}, h) \leq (1+\epsilon)\delta(v,h)$.
\item
\textbf{Phase 2.} In this phase, we choose a set of nodes $z_i$ that are farther than $z_0$ to the root $r$, using Algorithm~\ref{algo3}.
Define a node in $z$ on $P'$ to be a \emph{candidate} with respect to (w.r.t.) $i$ \emph{iff.}
\begin{enumerate}
\item
$z$ is farther to the root than $z_{i+1}$, and
\item
$\delta(v,z) < (1+\epsilon)^{-1}(\delta(v,z_{i-1} + \delta(r,z) + \delta(r,z_{i-1}))$.
\end{enumerate}

\SetAlgorithmName{Algorithm}{Name}{Algorithm}
\begin{algorithm}[H]
\SetAlgoLined

\textbf{Initialization: } $i \leftarrow 1$\\
\While{$\exists$ candidates w.r.t $i$}{
	$z_i \leftarrow$ candidate $z$ that is closest from $r$\\
	$i \leftarrow i+1$
}
\caption{}
\label{algo3}
\end{algorithm}

Note the invariant for Phase 2: for $i > 0$ and any node $h$ lying strictly between
$z_i$ and $z_{i-1}$ on $P'$, we have $\delta(v,z_{i-1})+ \delta(z_{i-1}, h) \leq (1+\epsilon)\delta(v,h)$.
In particular, if there is no candidate w.r.t. $k$, then for any node $h$ lying beyond
$z_{i-1}$ on $P'$, we have $\delta(v,z_{i-1})+ \delta(z_{i-1}, h) \leq (1+\epsilon)\delta(v,h)$.
\end{itemize}
Clearly, the $\{z_i\}$ chosen satisfies the distance condition. It remains to show that the number of $z_i$ chosen
is $O(\frac{1}{\epsilon})$. We show the analysis for $i>0$ and it applies to the case of $i<0$ in the similar way.

Since
\begin{eqnarray*}
\delta(v,z_i) &<& (1+\epsilon)^{-1}(\delta(v,z_{i-1}) + \delta(r,z_i) - \delta(r,z_{i-1}))\\
& \leq & (1+\epsilon)^{-1}\delta(v,z_{i-1}) + \delta(r,z_i) - \delta(r,z_{i-1})\\
& \leq & \delta(v,z_{i-1}) - (\epsilon - \epsilon^2)\delta(v,z_{i-1}) + \delta(r,z_i) - \delta(r,z_{i-1})\\
& \leq & \delta(v,z_{i-1}) - (\epsilon - \epsilon^2)\delta(v,z_{0}) + \delta(r,z_i) - \delta(r,z_{i-1}),
\end{eqnarray*}
then
\begin{eqnarray*}
\delta(v,z_i) - \delta(z_i,r) & < & \delta(v,z_{i-1}) - \delta(r,z_{i-1}) - (\epsilon - \epsilon^2)\delta(v,z_{0})\\
& < & \delta(v,z_{0}) - \delta(r,z_{0}) - i(\epsilon - \epsilon^2)\delta(v,z_{0})\\
\end{eqnarray*}
Noting $\delta(v,z_i) - \delta(z_i,r) \geq -\delta(v,z_{0}) - \delta(r,z_{0})$, it follows that $i < 2(\epsilon-\epsilon^2)^{-1}$.
This implies the lemma.
\end{proof}

\end{document}